\theoremstyle{thmstyleone}%
\newtheorem{theorem}{Theorem}
\theoremstyle{thmstyletwo}%
\newtheorem{example}{Example}%
\newtheorem{lemma}{Lemma}%
\theoremstyle{thmstylethree}%
\begin{document}

\title[Article Title]{Limit of the Maximum Random Permutation Set Entropy}


\author[1,2]{\fnm{Jiefeng} \sur{Zhou}}

\author[3]{\fnm{Zhen} \sur{Li}}

\author[4]{\fnm{Kang Hao} \sur{Cheong}}

\author*[1,5]{\fnm{Yong} \sur{Deng}}\email{dengentropy@uestc.edu.cn, prof.deng@hotmail.com}

\affil*[1]{\orgdiv{Institute of Fundamental and Frontier Science}, \orgname{University of Electronic Science and Technology of China}, \orgaddress{\city{Chengdu}, \postcode{610054}, \state{Sichuan}, \country{China}}}

\affil[2]{\orgdiv{Yingcai Honors College}, \orgname{University of Electronic Science and Technology of China}, \orgaddress{\city{Chengdu}, \postcode{611731}, \state{Sichuan}, \country{China}}}

\affil[3]{\orgname{China Mobile Information Technology Center}, \orgaddress{\city{Beijing}, \postcode{100029}, \country{China}}}

\affil[4]{\orgdiv{Science, Mathematics and Technology Cluster}, \orgname{Singapore University of Technology and Design (SUTD)}, \orgaddress{\postcode{S487372}, \country{Singapore}}}

\affil[5]{\orgdiv{School of Medicine}, \orgname{Vanderbilt University}, \orgaddress{\city{Nashville}, \postcode{37240}, \state{Tennessee}, \country{USA}}}


\abstract{
The Random Permutation Set (RPS) is a new type of set proposed recently, which can be regarded as the generalization of evidence theory. To measure the uncertainty of RPS, the entropy of RPS and its corresponding maximum entropy have been proposed. Exploring the maximum entropy provides a possible way of understanding the physical meaning of RPS. In this paper, a new concept, the \textit{envelope} of entropy function, is defined. In addition, the limit of the \textit{envelope} of RPS entropy is derived and proved. Compared with the existing method, the computational complexity of the proposed method to calculate the \textit{envelope} of RPS entropy decreases greatly. The result shows that when $N \to \infty$, the limit form of the \textit{envelope} of the entropy of RPS converges to $e \times (N!)^2$, which is highly connected to the constant $e$ and factorial. Finally, numerical examples validate the efficiency and conciseness of the proposed \textit{envelope}, which provides a new insight into the maximum entropy function.
}

\keywords{Shannon entropy, Deng entropy, Dempster–Shafer evidence theory, Approximation, Random permutation set, Maximum entropy}



\maketitle

Uncertainty management is a significant issue that has attracted a lot of interest in various kinds of research fields. The classical tool to deal with uncertainty is probability theory \citep{jaynes2003probability}, which allocates the probability distribution defined in a mutually exclusive event space. However, one particular challenge arises when uncertain information needs to be defined on the power set of the event space, which cannot be effectively handled by probability theory. To address this issue, Dempster-Shafer evidence theory (DSET) \citep{dempster2008upper, shafer1976mathematical} is developed. DSET generalized probability theory by extending the probability distribution to a mass function defined on the collection of all possible subsets. As a result, this extension enabled the application of DSET to domains characterized by a high degree of complexity and uncertainty \citep{Xiao2022GQET, Xiao2022Acomplexweighted, Xiao2023QuantumXentropy, Xiao2023NQMF}. But both theories overlook the significance of considering ordered information in processing uncertain information \citep{he2021mmget}. This aspect, however, holds substantial importance and should not be ignored. Thus, Deng \citep{deng2022random} proposed the random permutation set (RPS), considering the ordered information while fully compatible with DSET and probability theory.

To measure the uncertainty, Shannon entropy \citep{shannon1948mathematical} is used in probability theory. In DSET, Deng \citep{deng2016deng} proposed Deng entropy. In RPS, Chen and Deng \citep{chen2023entropy} proposed RPS entropy. Each entropy offers an efficient approach to understanding uncertainty. The maximum entropy principle was extensively studied by Jaynes \citep{jaynes1957information,jaynes1982rationale}. The principle asserts that the distribution with the maximum entropy is the most appropriate representation of a system's current state. For convenience, this paper defines the concept of \textit{envelope} for entropy. It refers to the function inside the logarithmic function in the overall entropy expression for the maximum entropy case of a system. This concept becomes particularly significant when considering specific examples or scenarios. For instance, in a thermodynamic system, the \textit{envelope} of entropy can represent the range of possible energy states that lead to the highest level of disorder or randomness. Understanding the \textit{envelope} of entropy provides valuable insights into the behavior and characteristics of the system under consideration. This principle has been applied to various kinds of practical applications and theories, such as emergency management \citep{che2022maximum}, power law distribution \citep{yu2022derive}, and negation transformation \citep{deng2020negation}. As shown in Fig.~\ref{fig.envelope}, for a sample space whose cardinality is $N$, the \textit{envelope} of Shannon entropy of a probability distribution is $N$, while in a power set whose maximum cardinality of its subsets is $N$, the \textit{envelope} of Deng entropy is $3^N-2^N$. Although the analytical expression of the \textit{envelope} of RPS entropy is given by Deng and Deng \citep{deng2022maximum}, its computational and expression complexity makes it difficult and inconvenient to discuss the properties and its physical meaning of maximum RPS entropy.

\begin{figure}
\centering
\includegraphics[width=0.9\textwidth]{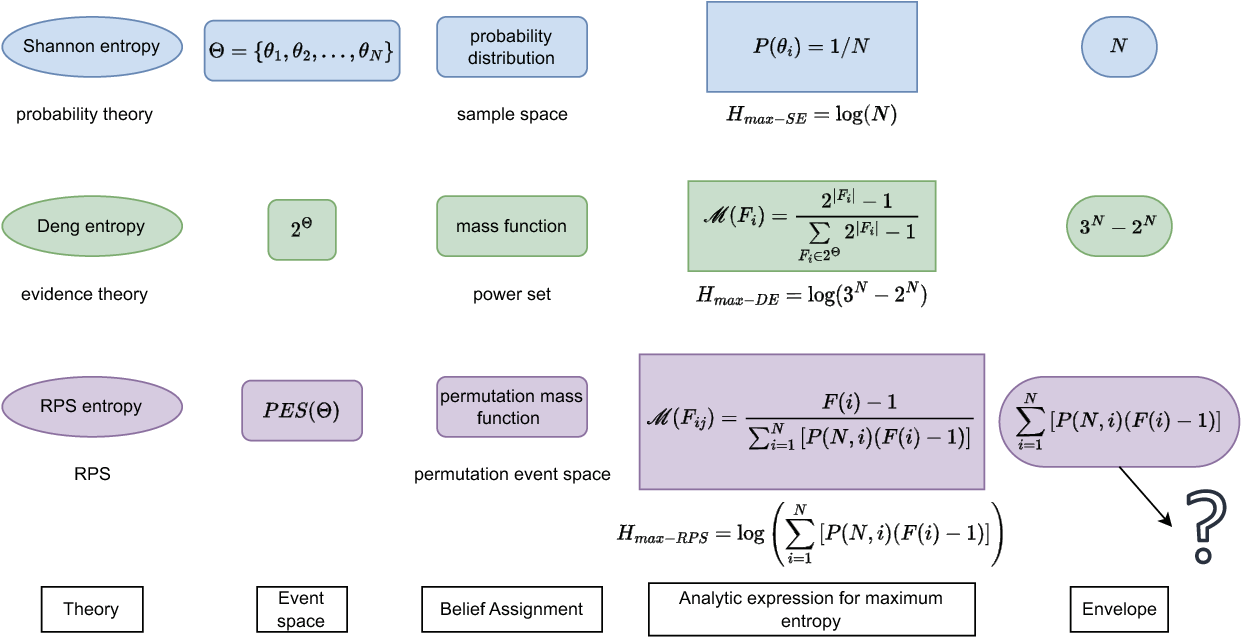}
\caption{The connection between Shannon entropy \citep{shannon1948mathematical}, Deng entropy \citep{deng2016deng}, and RPS entropy \citep{chen2023entropy}. The \textit{envelope} is the function in the logarithmic function in maximum entropy expression.}\label{fig.envelope}
\end{figure}

To address this issue, the limit form of the \textit{envelope} of RPS entropy is presented and proved in this paper. The limit $e \times (N!)^2$ is very concise and related to the natural constant $e$ and factorial, whose physical meaning may be an interesting topic in future research. Besides, numerical analysis validates the conciseness and correctness of the result.

The remainder of this paper is structured as follows. First, Sec.~\ref{se.preliminaries} introduces preliminary information and definitions related to the work. Next, Sec.~\ref{se.envelope} presents the definition and an illustrative example of the concept of \textit{envelope}. After that, Sec.~\ref{se.proof} provides and proves the limit form of the entropy \textit{envelope} for the RPS entropy. To supplement this theoretical analysis, Sec.~\ref{se.examples} then gives a comparative analysis between three types of maximum entropies. Finally, Sec.~\ref{se.colclusion} summarizes the key conclusions of the paper.

\section{\label{se.preliminaries}Preliminaries}

Some preliminaries are introduced in this section.

\subsection{Mass function in power set}
To better model and reason the uncertainty in the real world, many theories and approaches have been proposed to address this issue, among which Dempster-Shafer evidence theory (DSET) \citep{dempster2008upper, shafer1976mathematical} is an effective method for uncertainty reasoning and information fusion \citep{Xiao2022Generalizeddivergence,zeng2023new}.

\subsubsection{\label{def:power_set}Power set}
Let $\Omega$ be the frame of discernment (FOD), expressed as $\Omega = \{\chi_1, \chi_2, \ldots, \chi_N\}$, whose elements are mutually exclusive and exhaustive. Its corresponding \textbf{power set} $2^\Omega$ contains all the subsets of $\Omega$, denoted as

    \begin{equation}
        2^\Omega = \{ \emptyset, \{\chi_1\}, \{\chi_2\}, \ldots, \{\chi_N\}, \{\chi_1,\chi_2 \}, \{\chi_1,\chi_3 \},\ldots, \Omega \}.
    \end{equation}

Since the cardinality of $2^\Omega$ is $2^N$, which is highly related to Sierpinski gasket, a recent study proposed by Zhou and Deng \citep{zhou2023generating} also reveals this connection.

\subsubsection{\label{def:mass_function}Mass function}
The \textbf{mass function} or basic probability assignment (BPA), is a mapping $\mathscr{M}: 2^\Omega \to [0,1]$ \citep{dempster2008upper, shafer1976mathematical} with bound conditions:
    \begin{equation}
\mathscr{M}(\emptyset) = 0, \quad \sum\limits_{M_i \in 2^\Omega } \mathscr{M}(M_i)=1.
    \end{equation}

\subsection{Deng entropy}

To measure uncertainty of a given system, several methods and entropies have been proposed, including $\chi$ entropy \citep{Xiao2021maximum} for the complex-valued distribution, belief structure method \citep{cui2022belief} for physiological signals, betweenness structural entropy \citep{zhangqi2022CSF} and local structure entropy \citep{lei2022node} for complex networks, an improved belief entropy \citep{chen2023newentropy} and divergence method \citep{gao2022improved, lyu2024belief} for decision making, generalized weighted permutation entropy for biomes analysis \citep{stosic2022generalized,stosic2024generalized}. Some methods, such as the Deng entropy \citep{deng2016deng}, though have been subject to critique \citep{Abell2017Analyzing, moral2020critique}, continue to be widely used in many areas and theories, including multi-criteria decision-making \citep{Xiao2020EFMCDM}, information fusion \citep{Xiao2021GIQ, Xiao2022GEJS}, pattern recognition \citep{kazemi2021fractional}, decisions prediction and evaluation \citep{lai2022comprehensive}, fractal dimension analysis \citep{qiang2021information}, information dimension analysis \citep{lei2022information}, and prior distribution \citep{li2023normal}.

\subsubsection{\label{def:deng_entropy}Deng entropy}
Given a mass function $\mathscr{M}(2^\Omega)$, \textbf{Deng entropy} \citep{deng2016deng} is defined as
    \begin{equation}
H_{DE}(\mathscr{M}) = - \sum\limits_{M_i \in 2^ \Omega} \mathscr{M}(M_i) \log \frac{\mathscr{M}(M_i)}{2^{|M_i|}-1},
    \end{equation}
where $|M_i|$ is the cardinality of $M_i$. When the element in each mass function is a singleton set, i.e. $\forall M_i \in 2^\Omega, |M_i|=1$, then Deng entropy degenerates into \textbf{Shannon entropy} \citep{shannon1948mathematical}:
\begin{equation}\label{eq.Shannon}
    H_{SE}(P) = - \sum\limits_{p_i \in P} p_i \log (p_i).
\end{equation}

\begin{theorem}[Maximum Deng entropy\citep{kang2019maximum}]
Given a FOD $\Omega$, the \textbf{maximum Deng entropy} $H_{max-DE}$  can be obtained when its mass function has the following form:
    \begin{equation}
        \mathscr{M}(M_i)=\frac{2^{|M_i|}-1}{\sum\limits_{M_i \in 2^\Omega}2^{|M_i|}-1}.
    \end{equation}
    Its corresponding Deng entropy reaches its maximum:

    \begin{align}
        H_{max-DE} &= \sum\limits_{M_i \in 2^\Omega} \mathscr{M}(M_i)\log \frac{\mathscr{M}(M_i)}{2^{|M_i|}-1} \nonumber \\
        &= \log \sum\limits_{M_i \in 2^\Omega} (2^{|M_i|}-1).
    \end{align}
\end{theorem}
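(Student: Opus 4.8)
The plan is to treat this as a constrained optimization: maximize $H_{DE}(\mathscr{M})$ over the probability simplex $\{\mathscr{M}(M_i)\ge 0:\ \sum_{M_i\in 2^\Omega}\mathscr{M}(M_i)=1\}$, noting first that the $\emptyset$ term plays no role, since $2^{|\emptyset|}-1=0$ forces us to restrict attention to the $2^N-1$ nonempty subsets (consistent with $\mathscr{M}(\emptyset)=0$). Writing $a_i:=2^{|M_i|}-1$ and $C:=\sum_{M_i\in 2^\Omega}(2^{|M_i|}-1)$, the key observation is that $q_i:=a_i/C$ is itself a probability distribution over the nonempty subsets, so $H_{DE}$ can be recast in terms of the relative entropy between $\mathscr{M}$ and $q$.

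Concretely, I would expand
\[
H_{DE}(\mathscr{M})=-\sum_{M_i}\mathscr{M}(M_i)\log\frac{\mathscr{M}(M_i)}{a_i}=-\sum_{M_i}\mathscr{M}(M_i)\log\frac{\mathscr{M}(M_i)}{q_i}+\log C,
\]
where $\log C$ is pulled out using $\sum_i\mathscr{M}(M_i)=1$. The first term equals $-D_{\mathrm{KL}}(\mathscr{M}\,\|\,q)\le 0$ by Gibbs' inequality (equivalently the log-sum inequality, or Jensen applied to the concave $\log$), with equality if and only if $\mathscr{M}(M_i)=q_i=a_i/C$ for all $i$. Hence $H_{DE}(\mathscr{M})\le\log C$ with the claimed maximizer, and substituting $\mathscr{M}(M_i)=a_i/C$ back in gives $\log(\mathscr{M}(M_i)/a_i)=-\log C$, so $H_{max-DE}=\log C=\log\sum_{M_i\in 2^\Omega}(2^{|M_i|}-1)$, matching the stated value.

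As a cross-check I would also carry out the Lagrange-multiplier computation: setting $\partial/\partial\mathscr{M}(M_i)\big[H_{DE}+\lambda(\sum_j\mathscr{M}(M_j)-1)\big]=0$ yields $-\log(\mathscr{M}(M_i)/a_i)-1+\lambda=0$, so $\mathscr{M}(M_i)\propto a_i$, and normalization forces $\mathscr{M}(M_i)=a_i/C$; concavity of $\mathscr{M}\mapsto-\sum_i\mathscr{M}(M_i)\log(\mathscr{M}(M_i)/a_i)$ (a sum of concave functions) certifies that this stationary point is the global maximum.

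The argument is essentially routine. The only points that need explicit care are (i) the bookkeeping around $\emptyset$, i.e.\ justifying that the optimization is effectively over the nonempty subsets, and (ii) pinning down the equality case in Gibbs' inequality (or the boundary behaviour in the Lagrange version) so that the maximizer is genuinely unique. Since the statement is quoted from \citep{kang2019maximum}, I would keep the write-up short and take the relative-entropy route as the cleanest path.
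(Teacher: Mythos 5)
Your argument is correct: the decomposition $H_{DE}(\mathscr{M})=-D_{\mathrm{KL}}(\mathscr{M}\,\|\,q)+\log C$ with $q_i=(2^{|M_i|}-1)/C$, together with Gibbs' inequality and its equality case, gives both the maximizer and the value $\log C$ cleanly, and your handling of the $\emptyset$ term is the right bookkeeping. Note that the paper does not prove this theorem at all --- it is quoted as a preliminary from the cited reference --- so there is no in-paper argument to compare against; your relative-entropy route (or the Lagrange cross-check) is the standard one. One small point worth flagging: the theorem as printed in the paper drops the minus sign in front of $\sum_{M_i}\mathscr{M}(M_i)\log\bigl(\mathscr{M}(M_i)/(2^{|M_i|}-1)\bigr)$ relative to the definition of $H_{DE}$; at the maximizer that sum equals $-\log C$, so the displayed chain of equalities only holds with the sign restored, which your derivation implicitly does.
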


\subsection{Random Permutation Set}

\textbf{Random Permutation Set (RPS)} has been presented as a means to manage uncertainty with ordered information \citep{deng2022random}. RPS entails permuting items in a given set, allowing for effective handling of uncertainty in ordered data. Based on the above, some works like the distance of RPS \citep{chen2023distance} and the fusion order \citep{chen2023permutation} are developed. Some fundamental definitions of RPS are introduced briefly here.

\subsubsection{\label{def:PES}Permutation Event Space (PES)}
Given a finite set with $N$ elements $\Omega = \{\chi_1, \chi_2,\ldots, \chi_N \}$, its \textbf{Permutation Event Space (PES)} \citep{deng2022random} is a \textbf{ordered} set containing all possible permutations of all subsets of $\Omega$.

\begin{align}
    PES(\Omega)= & \left\{M_{i,j} \mid i=0, \ldots, N ; j=1, \ldots, A_{N}^{i}\right\}                                                                    \nonumber \\
    =            & \left\{\emptyset,\left(\chi_1\right),\left(\chi_2\right), \ldots,\left(\chi_N\right),\left(\chi_1, \chi_2\right),\right. \nonumber\\
    & \left(\chi_2, \chi_1\right), \ldots,\left(\chi_{N-1}, \chi_N\right),  \left(\chi_N, \chi_{N-1}\right), \nonumber\\
    & \left.\ldots,\left(\chi_1, \chi_2, \ldots, \chi_N\right), \ldots,\left(\chi_N, \chi_{N-1}, \ldots, \chi_1\right)\right\}.
\end{align}

Where $A_{N}^{i}={N!}/{(N-i)!}$ is the number of choices to select $i$ ordered elements from a collection with $N$ elements. The element $M_{i,j}$ is called \textbf{permutation event}.

\subsubsection{\label{def:RPS}Random Permutation Set (RPS)}
Given a finite set with $N$ elements $\Omega = \{\chi_1, \chi_2,\ldots, \chi_N\}$, its \textbf{Random Permutation Set (RPS)} \citep{deng2022random} is a set of pairs.

\begin{equation}
    RPS(\Omega)=\{\langle M_{i,j}, \mathcal{M}(M_{i,j})\rangle \mid M_{i,j} \in P E S(\Omega)\}.
\end{equation}

Where $\mathcal{M}$ is called \textbf{permutation mass function (PMF)}, a mapping: $PES(\Omega) \to [0,1]$ with bound conditions:

\begin{gather}
    \mathcal{M}(\emptyset) = 0, \quad \sum\limits_{M_{i,j} \in PES(\Omega) } \mathcal{M}(M_{i,j})=1.
\end{gather}

RPS is completely consistent with DSET and probability theory. When the order of elements within PES is disregarded, PES becomes a power set, PES effectively becomes a power set, while the PMF within RPS reduces to the mass function. Moreover, if each permutation event contains only a singular element, RPS simplifies into probability theory, wherein PES degenerates into the sample space. Fig.~\ref{fig.rel_RPS} illustrates the connection between RRS, DSET, and probability theory.

\begin{figure}[h]
\centering
\includegraphics[width=0.9\textwidth]{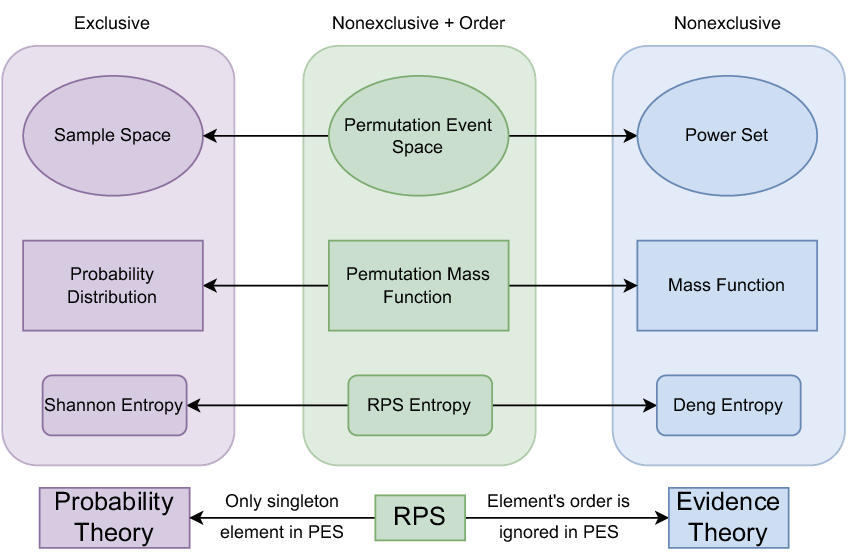}
\caption{The relationship between RPS, DSET and probability theory.}\label{fig.rel_RPS}
\end{figure}

To ascertain the degree of uncertainty in RPS, the entropy of RPS, as introduced by Chen and Deng \citep{chen2023entropy}, plays a significant role, exhibiting compatibility with both Deng entropy and Shannon entropy.

\subsubsection{\label{def:RPS_entropy}RPS entropy}
For a RPS: $RPS(\Omega)=\{\langle M_{i,j}, \mathcal{M}(M_{i,j})\rangle \mid M_{i,j} \in P E S(\Omega)\}$, defined on a PES: $PES(\Omega)=\left\{M_{i,j} \mid i=0, \ldots, N ; j=1, \ldots, A_{N}^{i}\right\}$, the entropy of RPS \citep{chen2023entropy} is defined as follows.

\begin{equation}
H_{RPS}(\mathcal{M})=-\sum_{i=1}^N \sum_{j=1}^{A_{N}^{i}} \mathcal{M}\left(M_{i,j}\right) \log \left(\frac{\mathcal{M}\left(M_{i,j}\right)}{S_{A}(i)-1}\right).
\end{equation}
Where $S_{A}(i)=\sum\limits_{a=0}^i A_{i}^{a} = \sum\limits_{a=0}^i \frac{i!}{(i-a)!}$ is the sum of all possible permutations of a finite set containing $i$ elements.

Though the entropy of RPS is introduced, there is a lack of in-depth analysis regarding its maximum entropy. Therefore, Deng and Deng \citep{deng2022maximum} presented an analytical expression for the maximum entropy of RPS as well as its PMF condition.

\begin{theorem}[Maximum RPS entropy \citep{deng2022maximum}]\label{th.maximumRPS}
Given a PES: $PES(\Omega)=\left\{M_{i,j} \mid i=0, \ldots, N ; j=1, \ldots, A_{N}^{i}\right\}$, if and only if its PMF satisfies

    \begin{equation}
\mathcal{M}(M_{i,j})=\frac{S_{A}(i)-1}{\sum_{i=1}^N \left[A_{N}^{i}(S_{A}(i)-1)\right]}.
    \end{equation}
    Then the entropy of RPS reaches its maximum:

    \begin{equation}\label{eq.max_RPS}
H_{max-RPS}=\log\left( \sum\limits_{i=1}^{N} \left[ A_{N}^{i} (S_{A}(i)-1)\right]\right).
    \end{equation}
\end{theorem}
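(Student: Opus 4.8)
The statement is a constrained-optimisation claim, so the plan is to regard $H_{RPS}$ as a function of the $\sum_{i=1}^{N}A_N^i$ free variables $m_{i,j}:=\mathcal{M}(M_{i,j})$ subject to the single linear constraint $\sum_{i,j}m_{i,j}=1$ with $m_{i,j}\ge 0$ (the boundary condition $\mathcal{M}(\emptyset)=0$ already removes the $i=0$ term), and to identify the maximiser as the asserted PMF. First I would record the abbreviation $c_i:=S_A(i)-1$ and observe that $c_i\ge 1>0$ for every $i\ge 1$ (indeed $S_A(1)=A_1^0+A_1^1=2$, and $S_A(i)$ is increasing in $i$), so that the normalising constant $C:=\sum_{i=1}^{N}A_N^i c_i$ is strictly positive and $q_{i,j}:=c_i/C$ is a genuine PMF on $PES(\Omega)\setminus\{\emptyset\}$.

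The heart of the argument is a Gibbs-type inequality obtained by rewriting the entropy against the candidate distribution $q$. Using $c_i=C\,q_{i,j}$ and $\sum_{i,j}m_{i,j}=1$,
\[
H_{RPS}(\mathcal{M})=\sum_{i=1}^{N}\sum_{j=1}^{A_N^i} m_{i,j}\log\frac{c_i}{m_{i,j}}=\log C+\sum_{i,j}m_{i,j}\log\frac{q_{i,j}}{m_{i,j}}=\log C-D(m\,\|\,q),
\]
where $D(m\,\|\,q)=\sum_{i,j}m_{i,j}\log\!\big(m_{i,j}/q_{i,j}\big)$ is the Kullback--Leibler divergence, with the usual convention $0\log 0=0$. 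By Jensen's inequality applied to the concave function $\log$ (equivalently, the non-negativity of relative entropy), $D(m\,\|\,q)\ge 0$, with equality \emph{if and only if} $m_{i,j}=q_{i,j}$ for all $(i,j)$. Hence $H_{RPS}(\mathcal{M})\le\log C$, and since the simplex is compact and $H_{RPS}$ continuous the bound is attained; equality holds exactly when $\mathcal{M}(M_{i,j})=c_i/C=(S_A(i)-1)\big/\sum_{i=1}^{N}[A_N^i(S_A(i)-1)]$, which is precisely the claimed biconditional. Substituting this PMF back (or simply reading off the equality case) yields $H_{max-RPS}=\log C=\log\big(\sum_{i=1}^{N}[A_N^i(S_A(i)-1)]\big)$.

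An alternative route is Lagrange multipliers: setting $\partial/\partial m_{i,j}\big(H_{RPS}+\lambda(\sum_{i,j}m_{i,j}-1)\big)=0$ forces $\log(c_i/m_{i,j})$ to be independent of $(i,j)$, hence $m_{i,j}\propto c_i$, and then one invokes concavity of $H_{RPS}$ (a sum of terms $-m\log m$ plus a linear term) to promote this stationary point to the global maximum; but the relative-entropy presentation is preferable because it delivers the sharp value \emph{and} the uniqueness/"only if" part in one stroke. I do not anticipate a real obstacle here — the only points demanding care are (i) verifying $c_i>0$ so that $q$ is well defined, and (ii) being precise about the equality case of Jensen's inequality, in particular that it rules out any $m_{i,j}$ vanishing, since equality forces $m=q$ componentwise.
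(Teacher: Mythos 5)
Your argument is correct and complete. Note that the paper itself does not prove this theorem --- it is imported as a preliminary from the cited reference \citep{deng2022maximum} --- so there is no in-paper proof to compare against. Your relative-entropy decomposition $H_{RPS}(\mathcal{M})=\log C - D(m\,\|\,q)$ with $q_{i,j}=(S_A(i)-1)/C$ and $C=\sum_{i=1}^{N}A_N^i(S_A(i)-1)$ is the standard and cleanest route: it delivers the upper bound $\log C$, its attainment, and the uniqueness (the ``only if'' direction) in one step via the equality case of $D(m\,\|\,q)\ge 0$, and the two points you flag --- that $S_A(i)-1\ge 1$ for $i\ge 1$ so $q$ is a genuine PMF, and that the $i=0$ term is excluded by $\mathcal{M}(\emptyset)=0$ --- are exactly the details that need care.
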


\section{\label{se.envelope}The envelope of entropy}

In this section, the definition of the \textit{envelope} of entropy is given, followed by an example as an illustration.

Within a system, the value of entropy can vary across multiple situations. In this context, the \textit{envelope} of entropy refers to the function encapsulated within the logarithmic expression of entropy that reaches its maximum value.

\subsection{\label{def:envelope}The definition of envelope of entropy}
    For a given definition of entropy $H_{entropy}=\mathbb{E}(-\log(f(P)))$, where $P$ is a belief assignment within a system $\mathcal{S}$ and $\mathbb{E}(X)$ is the mathematical expectation of $X$. Then the \textit{envelope} of $H_{entropy}$ can be defined as follows.

    \begin{align}
        H_{max-entropy} = \max_{P \in \mathcal{S}} \left[\mathbb{E}(-\log(f(P)))\right], \\
        C_e(H_{entropy}) = \exp (H_{max-entropy}) \label{eq.envelope}.
    \end{align}
    Where $H_{max-entropy}$ is the maximum entropy for a given system, and $C_e$ is the \textit{envelope} of a given entropy $H_{entropy}$.

\begin{example}[The envelope of Shannon entropy and Deng entropy]
    Given a finite set $\Omega = \{\chi_1, \chi_2, \ldots, \chi_N\}$, calculate
    \begin{enumerate}
        \item the \textit{envelope} of Shannon entropy \citep{shannon1948mathematical};
        \item the \textit{envelope} of Deng entropy \citep{deng2016deng} in power set $2^\Omega$.
    \end{enumerate}
\end{example}

For a finite set $\Omega$, Shannon entropy reaches its maximum when the probability distribution is uniform, i.e. $P(\chi_i)=1/N$. Then Eq.(\ref{eq.Shannon}) can be rewritten as:

\begin{equation}
    H_{SE}(P)=\log (N).
\end{equation}

Based on Eq.(\ref{eq.envelope}), the \textit{envelope} of Shannon entropy is

\begin{equation}
    C_e(H_{S}) = \exp \left(\log (N)\right) = N.
\end{equation}

If the mass function satisfies \citep{kang2019maximum}

\begin{equation}
    \mathscr{M}(M_i)=\frac{2^{|M_i|}-1}{\sum_{M_i \in 2^\Omega}2^{|M_i|}-1},
\end{equation}

then Deng entropy reaches its maximum, which can be simplified as \citep{qiang2023multifractal}

\begin{align}
    H_{max-DE} &= \sum\limits_{M_i \in 2^\Omega} \mathscr{M}(M_i)\log \frac{\mathscr{M}(M_i)}{2^{|M_i|}-1} \nonumber \\
    &= \log \sum\limits_{M_i \in 2^\Omega} (2^{|M_i|}-1) \nonumber \\
    &= \log \left( \sum\limits_{a=0}^{N}\left(C_{N}^{a}(2^a-1)\right)  \right) \nonumber \\
    & = \log \left(3^N - 2^N  \right),
\end{align}
where $C_{N}^{a}={N!}/{[(N-a)! \times a!]}$ is the combination number.

Based on Eq.(\ref{eq.envelope}), the \textit{envelope} of Deng entropy is

\begin{equation}
    C_e(H_{DE}) = \exp \left(\log \left(3^N - 2^N  \right)\right) = 3^N - 2^N.
\end{equation}

As shown in Fig.~\ref{fig.def_envelope}, given a uniform probability distribution, $H_{SE}$ reaches its maximum $\log (N)$, then the \textit{envelope} of $H_{SE}$ is $N$. When the mass function satisfies $\mathscr{M}(M_i)={2^{|M_i|}-1}/{[\sum_{M_i \in 2^\Omega}2^{|M_i|}-1]}$, $H_{DE}$ reaches its maximum: $\log \left(3^{N} - 2^{N}\right)$, then the \textit{envelope} of $H_{DE}$ is $\left(3^{N} - 2^{N}\right)$.


\begin{figure}
\centering
\includegraphics[width=0.9\textwidth]{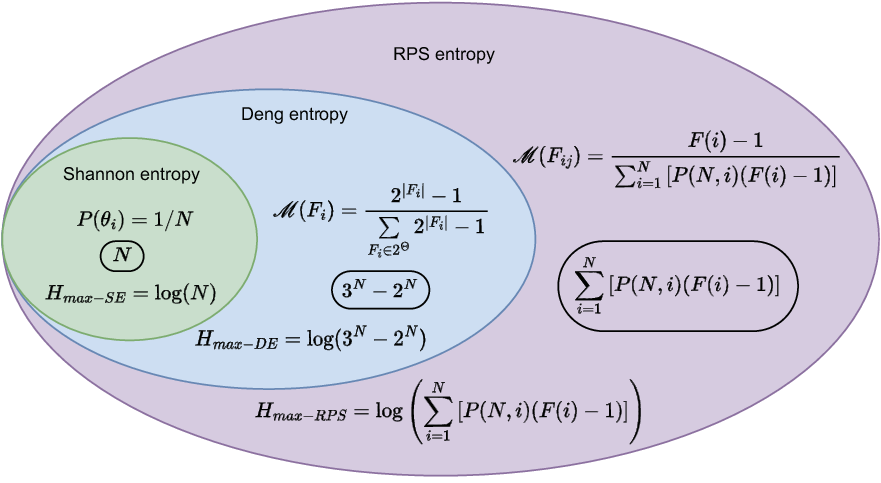}
\caption{\label{fig.def_envelope}The \textit{envelope} of Shannon entropy, Deng entropy and RPS entropy.}
\end{figure}

\begin{theorem}[Envelope of entropy]\label{th.envelope}
The \textit{envelope} of Shannon entropy, Deng entropy, and RPS entropy are presented as follows.
\begin{gather}
    \text{Shannon entropy}:C_e(H_{SE})=N,\\
    \text{Deng entropy}: C_e(H_{DE})=3^N-2^N,\\
    \text{RPS entropy}: C_e(H_{RPS}) = \sum\limits_{a=1}^{N} \left[ A_{N}^{a} (S_{A}(a)-1)\right].
\end{gather}
\end{theorem}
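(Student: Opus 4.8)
The plan is to note that, by Eq.(\ref{eq.envelope}), the \emph{envelope} of an entropy is nothing but $\exp$ applied to its maximum value, so the statement is equivalent to exhibiting $H_{max-SE}$, $H_{max-DE}$, and $H_{max-RPS}$ in closed form. Two of the three are already in hand: the Maximum Deng entropy theorem gives $H_{max-DE}=\log\sum_{M_i\in 2^\Omega}(2^{|M_i|}-1)$, and Theorem~\ref{th.maximumRPS} gives $H_{max-RPS}=\log\big(\sum_{i=1}^N A_N^i(S_A(i)-1)\big)$. Hence the only genuinely new work is the Shannon line together with two elementary simplifications, and the whole argument is essentially a corollary-style computation.

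For the Shannon line, first I would establish that $H_{SE}$ attains its maximum over probability vectors at the uniform distribution $p_i=1/N$. The most economical route is Gibbs' inequality (equivalently, nonnegativity of the Kullback--Leibler divergence, or concavity of $\log$ via Jensen): $H_{SE}(P)=-\sum_i p_i\log p_i\le-\sum_i p_i\log(1/N)=\log N$, with equality iff $P$ is uniform; alternatively one sets up the Lagrangian $-\sum_i p_i\log p_i-\lambda(\sum_i p_i-1)$ and checks that the stationary point is the maximizer. Either way $H_{max-SE}=\log N$, so $C_e(H_{SE})=\exp(\log N)=N$.

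For the Deng line, starting from $H_{max-DE}=\log\sum_{M_i\in 2^\Omega}(2^{|M_i|}-1)$ I would evaluate the sum by bucketing subsets according to cardinality --- there are $C_N^a$ subsets of size $a$ --- and collapse it with the binomial theorem:
\[
\sum_{M_i\in 2^\Omega}(2^{|M_i|}-1)=\sum_{a=0}^N C_N^a(2^a-1)=\sum_{a=0}^N C_N^a 2^a-\sum_{a=0}^N C_N^a=3^N-2^N,
\]
so $C_e(H_{DE})=\exp(\log(3^N-2^N))=3^N-2^N$. For the RPS line, Theorem~\ref{th.maximumRPS} already delivers $H_{max-RPS}=\log\big(\sum_{i=1}^N A_N^i(S_A(i)-1)\big)$; applying Eq.(\ref{eq.envelope}) and merely renaming the summation index $i\mapsto a$ yields $C_e(H_{RPS})=\exp(H_{max-RPS})=\sum_{a=1}^N A_N^a(S_A(a)-1)$, which is exactly the claimed formula.

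I do not anticipate a real obstacle: the substance of the RPS and Deng cases is carried entirely by the previously stated maximum-entropy theorems, and the only computation with any content is the binomial collapse $\sum_{a=0}^N C_N^a(2^a-1)=3^N-2^N$. The single point that warrants care is making the Shannon maximization rigorous --- that the supremum is attained and unique --- which Gibbs' inequality settles cleanly; everything else is substitution into Eq.(\ref{eq.envelope}) and reindexing.
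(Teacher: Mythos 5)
Your proposal is correct and follows essentially the same route as the paper: the paper likewise treats this theorem as a direct consequence of the definition of the envelope in Eq.(\ref{eq.envelope}) combined with the uniform-distribution maximizer for Shannon entropy, the binomial collapse $\sum_{a=0}^{N} C_{N}^{a}(2^a-1)=3^N-2^N$ for Deng entropy, and Theorem~\ref{th.maximumRPS} for the RPS case. Your explicit appeal to Gibbs' inequality merely makes rigorous a step the paper asserts without proof, so there is nothing substantively different to compare.
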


\section{Limit of the envelope of RPS entropy}\label{se.proof}

The proof of the limit of the \textit{envelope} of RPS entropy is presented in this section. Let $A_{N}^{u}$ marked as the $u$-permutation of $N$, and $S_{A}(N)$ marked as the sum of $A_{N}^{u}$ when $u$ change from $0$ to $N$.

\begin{align}
    A_{N}^{u} &=\frac{N!}{(N-u)!},                                            \\
S_{A}(N) & = \sum\limits_{u=0}^N A_{N}^{u} = \sum\limits_{u=0}^N \frac{N!}{(N-u)!}.
\end{align}

\begin{lemma}[sum of permutation]\label{le.sum_of_permutation}
When $N \geqslant 1$, $S_{u}(N)$ can be rewritten as

    \begin{equation}
S_{A}(N) = \sum\limits_{u=0}^N A_{N}^{u} = \lfloor e \times N! \rfloor ~ (N \geqslant 1),
    \end{equation}

    where $e$ is the nature constant and $\Gamma(N+1,1)$ is the "upper" incomplete gamma function:
    \begin{equation}
        \Gamma(u+1,x)=\int_x^\infty t^{u} e^{-t} \mathrm{d} t.
    \end{equation}

\end{lemma}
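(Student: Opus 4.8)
The plan is to reduce the permutation sum to a truncated Taylor series of $e$ and then bound the tail. First I would reverse the order of summation: putting $k = N - u$ turns $S_A(N) = \sum_{u=0}^N \frac{N!}{(N-u)!}$ into $S_A(N) = N! \sum_{k=0}^{N} \frac{1}{k!}$, which is $N!$ times the $N$-th partial sum of the series $\sum_{k \ge 0} 1/k!$. Since $e = \sum_{k=0}^\infty \frac{1}{k!}$, this yields the exact identity $e \cdot N! = S_A(N) + R_N$ with remainder $R_N = N! \sum_{k=N+1}^\infty \frac{1}{k!}$. Equivalently, using the closed form $\Gamma(N+1,1) = N!\,e^{-1}\sum_{k=0}^{N}\frac{1}{k!}$ of the upper incomplete gamma function, one can write $S_A(N) = e\,\Gamma(N+1,1)$, which is the incomplete-gamma expression hinted at in the statement; I would mention this as a remark but carry out the proof through $R_N$.

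The core of the argument is the two-sided estimate $0 < R_N < 1$ for every $N \geq 1$. The lower bound is immediate, since $R_N \geq \frac{N!}{(N+1)!} = \frac{1}{N+1} > 0$. For the upper bound I would factor $R_N = \frac{1}{N+1}\bigl(1 + \frac{1}{N+2} + \frac{1}{(N+2)(N+3)} + \cdots\bigr)$ and bound each later factor $\frac{1}{N+2+j}$ by $\frac{1}{N+2}$, comparing the bracketed sum with a geometric series:
\[ R_N < \frac{1}{N+1} \cdot \frac{1}{\,1 - \frac{1}{N+2}\,} = \frac{N+2}{(N+1)^2}. \]
Since $\frac{N+2}{(N+1)^2}$ is decreasing in $N$ and equals $\frac{3}{4}$ at $N = 1$, we get $R_N < 1$ for all $N \geq 1$.

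Finally, each term $\frac{N!}{(N-u)!} = N(N-1)\cdots(N-u+1)$ is a positive integer, so $S_A(N)$ is a positive integer; combined with $e \cdot N! = S_A(N) + R_N$ and $R_N \in (0,1)$ this forces $S_A(N) = \lfloor e \cdot N! \rfloor$, and incidentally shows $e \cdot N!$ is never an integer for $N \ge 1$, its fractional part being exactly $R_N$. The only genuinely delicate point is the \emph{strict} inequality $R_N < 1$ at the boundary case $N = 1$: a crude geometric comparison (bounding every factor by $\frac{1}{N+1}$) only gives $R_N < \frac{1}{N} = 1$ non-strictly there, so the slightly sharper factorisation above, isolating the first factor, is what removes the ambiguity.
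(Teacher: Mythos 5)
Your proof is correct, and it is worth noting how it differs from the paper's. Both arguments begin with the same reversal of summation, $S_A(N)=\sum_{u=0}^{N}\frac{N!}{(N-u)!}=N!\sum_{k=0}^{N}\frac{1}{k!}$, but they diverge at the decisive step. The paper passes through the upper incomplete gamma function, writing $S_A(N)=e\,\Gamma(N+1,1)$ via the finite-sum formula for $\Gamma(N+1,x)$ at $x=1$, and then simply \emph{cites} the identity $\Gamma(N+1,1)=\lfloor e\times N!\rfloor/e$ from a reference to conclude. You instead prove the floor identity from scratch: you isolate the tail $R_N=N!\sum_{k=N+1}^{\infty}\frac{1}{k!}$ so that $e\times N!=S_A(N)+R_N$, and establish $0<R_N<1$ by the two-sided estimate
\begin{equation*}
\frac{1}{N+1}\;\leqslant\;R_N\;<\;\frac{1}{N+1}\cdot\frac{1}{1-\frac{1}{N+2}}=\frac{N+2}{(N+1)^2}\leqslant\frac{3}{4}\quad(N\geqslant 1),
\end{equation*}
which, together with the integrality of $S_A(N)$, forces $S_A(N)=\lfloor e\times N!\rfloor$. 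Your version is the more self-contained of the two: it supplies the elementary tail bound that the paper delegates to a citation, and your observation that the crude geometric comparison only gives the non-strict bound $R_N<1/N$ at $N=1$ --- so that the first factor must be isolated before comparing with the geometric series --- is exactly the kind of boundary care the cited identity quietly absorbs. The paper's route, in exchange, makes the connection to $\Gamma(N+1,1)$ explicit, which it reuses later (e.g.\ in Eq.~(\ref{eq.enapprox})); your remark that $S_A(N)=e\,\Gamma(N+1,1)$ follows from the same partial sum keeps that link available.
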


\begin{proof}[Proof of Lemma \ref{le.sum_of_permutation}]

For $S_{A}(N)$, it can be rewritten as
\begin{equation}\label{eq.S_{A}(N)}
S_{A}(N)=\sum\limits_{u=0}^N A_{N}^{u}=N!\left(\sum\limits_{u=0}^N\frac{1}{u!}\right).
    \end{equation}

    For a non-negative integer $N$, the `upper' incomplete gamma function can be rewritten as \citep{arfken2011mathematical}:
    \begin{equation}\label{eq.Gamma}
        \Gamma(N+1,x) =\int_x^\infty t^{N} e^{-t}\mathrm{d} t = N! e^{-x} \sum\limits_{u=0}^{N} \frac{x^u}{u!}.
    \end{equation}

Let $x=1$ in Eq.(\ref{eq.Gamma}):
    \begin{equation}\label{eq.Gamma_x=1}
        \Gamma(N+1,x)|_{x=1} = \frac{N!}{e} \sum\limits_{u=0}^{N} \frac{1}{u!}.
    \end{equation}

Combined with Eq.(\ref{eq.Gamma_x=1}) and Eq.(\ref{eq.S_{A}(N)}), $S_{A}(N)$ can be simplified as:
\begin{equation}\label{eq.reS_{A}(N)}
S_{A}(N) = \sum\limits_{u=0}^N A_{N}^{u} = N!\left(\sum\limits_{u=0}^N\frac{1}{u!}\right)= e \times \Gamma(N+1,1).
    \end{equation}

Since $\forall N \in \mathbb{Z^+}, u=0,1,\ldots, N$,   $A_{N}^{u}$ is an integer, $S_{A}(N)$ can be simplified as:

\begin{equation}\label{eq.SimS_{A}(N)}
S_{A}(N) = e \times \Gamma(N+1,1) = \lfloor e \times N! \rfloor,
    \end{equation}

    where $\Gamma(N+1,1) = \frac{\lfloor e \times N! \rfloor}{e}$ \citep{arfken2011mathematical}.

\end{proof}

For convenience, let's define a function:

\begin{equation}\label{eq.S(N)}
S(N)=\sum\limits_{u=1}^{N} [ A_{N}^{u} (S_{A}(u)-1)].
\end{equation}

Then according to \autoref{th.maximumRPS}, the maximum entropy of RPS can be rewritten as:

\begin{equation}\label{eq.h_max-RPS}
    H_{max-RPS}(N)=\log S(N).
\end{equation}

\begin{lemma}[Approximation of $S(N)$]\label{le.appropriation}
\end{lemma}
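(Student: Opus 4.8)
The plan is to obtain an asymptotic expansion of $S(N)=\sum_{u=1}^{N} A_N^u\,(S_A(u)-1)$ and show that $S(N) \sim e\,(N!)^2$ as $N\to\infty$, so that the \emph{envelope} $C_e(H_{RPS})=S(N)$ has limiting form $e\times(N!)^2$. First I would substitute the closed form from Lemma~\ref{le.sum_of_permutation}, namely $S_A(u)=\lfloor e\cdot u!\rfloor$, and use the elementary bound $u! \le S_A(u) = u!\sum_{a=0}^{u}\tfrac{1}{a!} < e\cdot u!$ together with $A_N^u = N!/(N-u)!$, to rewrite
\begin{equation}
S(N) = \sum_{u=1}^{N} \frac{N!}{(N-u)!}\bigl(S_A(u)-1\bigr).
\end{equation}
The idea is that the sum is dominated overwhelmingly by its last term $u=N$, which equals $N!\,(S_A(N)-1) = N!\,(\lfloor e\cdot N!\rfloor - 1)$, and this last term alone is already asymptotic to $e\,(N!)^2$. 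So the main task is to control the tail $\sum_{u=1}^{N-1}$ and show it is $o\bigl((N!)^2\bigr)$, in fact negligible compared to the top term.

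The key steps, in order: (1) isolate the top term, writing $S(N) = N!\,(S_A(N)-1) + R(N)$ where $R(N)=\sum_{u=1}^{N-1} \tfrac{N!}{(N-u)!}(S_A(u)-1)$; (2) bound each summand of $R(N)$ crudely using $S_A(u)-1 < e\cdot u!$, giving $\tfrac{N!}{(N-u)!}(S_A(u)-1) < e\,N!\,\tfrac{u!}{(N-u)!}$; (3) observe that for $u \le N-1$ the ratio $\tfrac{u!}{(N-u)!} \le \tfrac{(N-1)!}{1!} = (N-1)!$, and more carefully that consecutive terms decay geometrically so that $R(N) = O\bigl(N!\cdot(N-1)!\bigr) = O\bigl((N!)^2/N\bigr)$; (4) combine with $N!\,(S_A(N)-1) = N!\,(\lfloor e\cdot N!\rfloor - 1) = e\,(N!)^2\bigl(1+O(1/N!)\bigr)$, using $\lfloor e\cdot N!\rfloor = e\cdot N! - \{e\cdot N!\}$ with the fractional part bounded; (5) conclude
\begin{equation}
\lim_{N\to\infty} \frac{S(N)}{(N!)^2} = e,
\end{equation}
equivalently $S(N) = e\,(N!)^2\bigl(1 + o(1)\bigr)$, which is the claimed limit of the \emph{envelope}. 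It may also be natural to phrase the result as $S(N) = e\,(N!)^2 - (N!)^2 + o\bigl((N!)^2\bigr)$ or to give the sharper two-term form if the tail estimate in step (3) is pushed to identify the $u=N-1$ contribution explicitly, but the headline statement only needs the leading term.

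The hard part will be step (3): getting a clean, rigorous bound on the tail $R(N)$ that is manifestly $o\bigl((N!)^2\bigr)$. The subtlety is that the summands $\tfrac{N!}{(N-u)!}(S_A(u)-1)$ do \emph{not} decrease monotonically in the naive sense one might hope — as $u$ increases, $A_N^u$ grows while one has to track it against $S_A(u)$ — so I would compare the ratio of the $u$-th term to the $(u+1)$-th term and show it is bounded below by something like $1 + c$ for $u \le N-1$, or more simply bound the whole tail by a geometric series with ratio controlled by $(N-u)$, using $S_A(u+1)/S_A(u) \to \infty$ but $A_N^{u}/A_N^{u+1} = (N-u)$. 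A convenient route is the very crude majorization $R(N) < e\,N!\sum_{u=1}^{N-1}\tfrac{u!}{(N-u)!}$ and then noting $\sum_{u=1}^{N-1}\tfrac{u!}{(N-u)!} \le (N-1)!\sum_{u=1}^{N-1}\tfrac{1}{(N-u)!} < (N-1)!\,(e-1)$, so $R(N) < (e-1)e\,N!\,(N-1)! = O\bigl((N!)^2/N\bigr)$, which already suffices; this avoids any delicate monotonicity argument entirely. I expect the rest of the argument to be routine manipulation of factorials and the incomplete-gamma identity already established in Lemma~\ref{le.sum_of_permutation}.
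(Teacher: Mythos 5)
Your proposal is correct, but it takes a genuinely different route from the paper. The paper's Lemma~\ref{le.appropriation} is literally a two-sided sandwich: it keeps the full sum $\sum_{u=1}^{N}\frac{u!}{(N-u)!}$ intact, writes $\lfloor e\cdot u!\rfloor = e\cdot u! - \varepsilon_1$ and $\lfloor e\cdot N!\rfloor = e\cdot N! - \varepsilon_2$ with $\varepsilon_i\in[0,1)$, and pushes the $\varepsilon$'s to their extremes to bound $S(N)$ between $e\,N!\bigl(\sum_{u=1}^{N}\frac{u!}{(N-u)!}-2\bigr)+1$ and $e\,N!\bigl(\sum_{u=1}^{N}\frac{u!}{(N-u)!}-1\bigr)+2$; the evaluation $\sum_{u=1}^{N}\frac{u!}{(N-u)!}\sim N!$ is then deferred to a separate lemma (Lemma~\ref{le.limit}), and only in Theorem~\ref{th.limS(N)} are the pieces combined. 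You instead isolate the dominant term $u=N$, namely $N!\,(\lfloor e\cdot N!\rfloor-1)=e\,(N!)^2+O(N!)$, and majorize the tail by $R(N)< e\,N!\sum_{u=1}^{N-1}\frac{u!}{(N-u)!}\le e\,N!\,(N-1)!\sum_{k=1}^{N-1}\frac{1}{k!}<e(e-1)\,N!\,(N-1)!=O\bigl((N!)^2/N\bigr)$, which yields $S(N)=e\,(N!)^2(1+o(1))$ in one stroke. This does not reproduce the paper's literal sandwich inequality, but it proves the stronger statement that the whole lemma chain exists to deliver, effectively subsuming Lemma~\ref{le.appropriation}, Lemma~\ref{le.limit}, and the limit computation in Theorem~\ref{th.limS(N)}. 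Your route is also cleaner on a technical point: the paper factors a single $\varepsilon_1$ out of a sum in which the fractional part of $e\cdot u!$ actually varies with $u$, whereas your one-sided bound $S_A(u)-1<e\cdot u!$ avoids tracking fractional parts term by term altogether. What the paper's approach buys in exchange is an explicit finite-$N$ two-sided estimate rather than only an asymptotic one.
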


\begin{align}
        e \times N! &\left(\sum_{u=1}^{N} \frac{u!}{(N-u)!} -2\right) +1 \leqslant \lim_{N \to \infty} S(N) \nonumber\\
        &\leqslant e \times N! \left(\sum_{u=1}^{N} \frac{u!}{(N-u)!} -1\right) +2 \label{eq.S(q)inequ}.
\end{align}

\begin{proof}[Proof of Lemma \ref{le.appropriation}]
    Based on Lemma \ref{le.sum_of_permutation} and Eq.(\ref{eq.S(N)}), $S(N)$ can be rewritten as:

    \begin{align}
S(N) & = \sum\limits_{u=1}^{N} [A_{N}^{u} (S_{A}(u)-1)] \nonumber                                                                     \\
& = \sum\limits_{u=1}^{N}[A_{N}^{u} (S_{A}(u))] - \sum\limits_{u=1}^{N}A_{N}^{u} \nonumber                                         \\
& = \sum\limits_{u=1}^{N}[A_{N}^{u} (S_{A}(u))] - \sum\limits_{u=0}^{N}A_{N}^{u} +1 \nonumber                                      \\
             & = \sum\limits_{u=1}^{N}[ \frac{N!}{(N-u)!} \lfloor e \times u! \rfloor] - \lfloor e \times N! \rfloor +1 \nonumber            \\
             & = N! \sum\limits_{u=1}^{N} \frac{\lfloor e \times u! \rfloor}{(N-u)!}  - \lfloor e \times N! \rfloor +1 \label{eq.S(N)_align}.
    \end{align}

    Suppose that
    \begin{equation}
        e \times u! = \lfloor e \times u! \rfloor + \varepsilon_1, \quad e \times N! = \lfloor e \times N! \rfloor + \varepsilon_2,
    \end{equation}

    where $ \varepsilon_1, \varepsilon_2 \in [0,1)$.

    Then Eq.(\ref{eq.S(N)_align}) can be rewritten as:

    \begin{align}
        S(N) & = N! \sum\limits_{u=1}^{N} \frac{\lfloor e \times u! \rfloor}{(N-u)!}  - \lfloor e \times N! \rfloor +1 \nonumber                                                                   \\
             & = N! \sum\limits_{u=1}^{N} \frac{e \times u! - \varepsilon_1}{(N-u)!} - (e \times N! - \varepsilon_2) +1 \nonumber                                                                  \\
             & = e \times N! \left(\sum\limits_{u=1}^{N} \frac{u!}{(N-u)!} -1\right)  -\varepsilon_1 \times N! \sum\limits_{j=0}^{N-1} \frac{1}{j!}+\varepsilon_2+ 1 \label{eq.S(N)_align_epsilon}.
    \end{align}

Based on Eq.(\ref{eq.S_{A}(N)}), the $\varepsilon_1 \times N! \sum\limits_{j=0}^{N-1} \frac{1}{j!}$ in Eq.(\ref{eq.S(N)_align_epsilon}) can be simplified as:

    \begin{align}
        \varepsilon_1 \times N! \sum\limits_{j=0}^{N-1} \frac{1}{j!} & = \varepsilon_1 \times N \times (N-1)! \sum\limits_{j=0}^{N-1} \frac{1}{j!} \nonumber                     \\
& = \varepsilon_1 \times N \times e \times \Gamma (N, 1) \nonumber                                           \\
& =  \varepsilon_1 \times e \times N! -\varepsilon_1 \varepsilon_3 \times N              \label{eq.enapprox},
    \end{align}
    where $\varepsilon_3 \in [0,1)$.

    Combined Eq.(\ref{eq.S(N)_align_epsilon}) and Eq.(\ref{eq.enapprox}), $S(N)$ can be rewritten as

    \begin{align}
        S(N)= &e \times N! \left(\sum\limits_{u=1}^{N} \frac{u!}{(N-u)!} -1\right) \nonumber\\
        &-\varepsilon_1 \times e \times N! + \varepsilon_1 \varepsilon_3 \times N +\varepsilon_2+ 1.
    \end{align}

    Since $\varepsilon_1, \varepsilon_2, \varepsilon_3 \in [0,1)$,
    then let $\varepsilon_1 \to 1, \varepsilon_2=\varepsilon_3=0$, $S(N)$ is no less than the following equation:
    \begin{equation}
        S(N) \geqslant e \times N! \left(\sum\limits_{u=1}^{N} \frac{u!}{(N-u)!} -2\right) +1 \label{eq.S(N)_geq}.
    \end{equation}
    Similarly, let $\varepsilon_1=0, \quad \varepsilon_2,\varepsilon_3 \to 1$, $S(N)$ is no greater than the following equation:

    \begin{align}\label{eq.S(N)_leq}
        S(N) \leqslant & e \times N! \left(\sum\limits_{u=1}^{N} \frac{u!}{(N-u)!} -1\right) + 2.
    \end{align}

    Based on Eq.(\ref{eq.S(N)_geq}) and Eq.(\ref{eq.S(N)_leq}), Lemma \ref{le.appropriation} is proved.

\end{proof}

\begin{lemma}[Limit of $\sum_{u=1}^{N} \frac{u!}{(N-u)!}$]\label{le.limit}
    \begin{equation}
        \lim\limits_{N \to\infty} \frac{\sum\limits_{u=1}^{N} \frac{u!}{(N-u)!}}{N!}-1 = 0\label{eq.limofsum}.
    \end{equation}

\end{lemma}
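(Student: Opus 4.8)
The plan is to reduce Lemma \ref{le.limit} to the statement that the normalized sum $\frac{1}{N!}\sum_{u=1}^{N}\frac{u!}{(N-u)!}$ converges to $1$, and then to prove that by isolating its single dominant term and controlling the remainder by an explicit $O(1/N)$ bound, rather than invoking the incomplete-Gamma machinery of Lemma \ref{le.sum_of_permutation}.

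First I would carry out the change of summation index $k=N-u$. Since $N!/(N-k)! = N(N-1)\cdots(N-k+1)$, this rewrites the ratio as
\begin{equation*}
\frac{1}{N!}\sum_{u=1}^{N}\frac{u!}{(N-u)!}
= \sum_{k=0}^{N-1}\frac{(N-k)!}{k!\,N!}
= 1 + \sum_{k=1}^{N-1}\frac{1}{k!\,N(N-1)\cdots(N-k+1)},
\end{equation*}
where the $k=0$ term is exactly $1$ (the product over an empty range being $1$). Hence the quantity inside the limit in Eq.(\ref{eq.limofsum}) equals precisely the tail $R_N := \sum_{k=1}^{N-1}\frac{1}{k!\,N(N-1)\cdots(N-k+1)}$, which is a finite sum of positive terms, so it remains only to show $R_N\to 0$ and to apply the squeeze theorem.

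The key estimate is elementary: for $1\le k\le N-1$ each of the $k$ factors $N-j$ with $0\le j\le k-1$ satisfies $N-j\ge N-(k-1)\ge 2$, so the product obeys $N(N-1)\cdots(N-k+1)\ge N$; therefore every summand of $R_N$ is at most $\frac{1}{k!\,N}$, and
\begin{equation*}
0 \;\le\; R_N \;\le\; \frac{1}{N}\sum_{k=1}^{N-1}\frac{1}{k!} \;<\; \frac{e-1}{N},
\end{equation*}
which tends to $0$ as $N\to\infty$. Combined with the identity above, this gives $\lim_{N\to\infty}\big(\frac{1}{N!}\sum_{u=1}^{N}\frac{u!}{(N-u)!}-1\big)=0$, which is exactly Eq.(\ref{eq.limofsum}).

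I do not expect a genuine obstacle here; the only points that need (minor) care are the legitimacy of the index shift and the bound $N(N-1)\cdots(N-k+1)\ge N$, both valid for every $N\ge 1$. A heavier alternative would be to substitute $S_A(u)$ and $S_A(N)$ via the incomplete-Gamma identity of Lemma \ref{le.sum_of_permutation} and estimate the resulting Gamma-function expression, but the direct term-by-term argument above is shorter and fully self-contained.
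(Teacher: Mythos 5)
Your proof is correct and follows essentially the same strategy as the paper: normalize by $N!$, observe that the $u=N$ (your $k=0$) term contributes exactly $1$, and show the remaining terms sum to $O(1/N)$. The only difference is in the tail estimate — the paper peels off the $u=N-1$ and $u=N-2$ terms explicitly and bounds the remaining $N-3$ terms by monotonicity as $\frac{N-3}{2N(N-1)}$, whereas your uniform per-term bound $\frac{1}{k!\,N}$ giving $R_N < \frac{e-1}{N}$ is a slightly cleaner way to reach the same $O(1/N)$ conclusion.
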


\begin{proof}[Proof of Lemma \ref{le.limit}]

Let $N=2k$, Then $\frac{u!}{(N-u)!}=\frac{1}{1}=1$. Thus, $\sum_{u=1}^{N} \frac{u!}{(N-u)!} \geqslant 1$.

    \begin{align}
        0 & \leqslant \lim_{N \to +\infty} \frac{\sum\limits_{u=1}^{N}\frac{u!}{(N-u)!}}{N!}-1\nonumber                                                 \\
          & =  \lim_{N \to +\infty} \sum\limits_{u=1}^{N} \frac{u!}{	N!(N-u)!}-1\nonumber                                                                \\
          & = \lim_{N \to +\infty} \left(\sum\limits_{u=1}^{N-3} \frac{u!}{	N!(N-u)!}\right) + \left.\frac{u!}{	N!(N-u)!}\right|_{u=N-2,N-1,N}-1\nonumber \\
          & = \lim_{N \to +\infty} \sum\limits_{u=1}^{N-3}\frac{u!}{N!(N-u)!}+ \frac{1}{	2N(N-1)}+\frac{1}{N}\label{eq.lim}.
    \end{align}

    Noted that $\frac{u!}{N!(N-u)!}$ is a monotonic series increasing as $u$ increases. So $\sum_{u=1}^{N-3}\frac{u!}{N!(N-u)!}$ is no greater than

    \begin{equation}\label{eq.ineq1}
        \sum\limits_{u=1}^{N-3}\frac{u!}{N!(N-u)!} \leqslant (N-3)\left.\frac{u!}{	N!(N-u)!}\right|_{u=N-2}=\frac{N-3}{2N(N-1)}.
    \end{equation}

    Based on Eq.(\ref{eq.ineq1}), Eq.(\ref{eq.lim}) can be simplified as:

    \begin{align}
        \lim_{N \to +\infty} & \sum_{u=1}^{N-3} \frac{u!}{N!(N-u)!}+ \frac{1}{2N(N-1)}+\frac{1}{N} & \nonumber \\ & \leqslant \lim_{N \to +\infty} \frac{N-2}{2N(N-1)} + \frac{1}{N} \nonumber \\
        & \quad = 0
    \end{align}

    Therefore, Eq.(\ref{eq.lim}) can be simplified as:
    \begin{align}
        0  \leqslant \lim_{N \to +\infty} \frac{\sum\limits_{u=1}^{N}\frac{u!}{(N-u)!}}{N!}-1\nonumber \leqslant 0.
    \end{align}

    Thus, the limit form of $\sum_{u=1}^{N} \frac{u!}{(N-u)!}$ is 
    \begin{equation}
        \lim\limits_{N \to \infty} \sum_{u=1}^{N} \frac{u!}{(N-u)!}=N!.
    \end{equation}

    Lemma \ref{le.limit} is proved.

\end{proof}

\begin{theorem}[the limit form of the envelope of RPS entropy]\label{th.limS(N)}
    \begin{align}
        \lim\limits_{N\to \infty} \mathop{C_e(H_{RPS}(\mathcal{M}))}\limits_{\mathcal{M}:PES(\Omega) \to [0,1], |\Omega|=N} &=\lim\limits_{N\to \infty} \sum\limits_{u=1}^{N} \left[ A_{N}^{u} (S_{A}(u)-1)\right] \nonumber\\
        & = \lim\limits_{N\to \infty} S(N) =  e \times (N!)^2.
    \end{align}
\end{theorem}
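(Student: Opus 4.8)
The plan is to combine the two preceding lemmas by a squeeze argument, after first observing that the assertion ``$\lim_{N\to\infty} S(N) = e\times (N!)^2$'' is really the asymptotic equivalence $S(N) \sim e\,(N!)^2$ as $N\to\infty$, i.e.\ $S(N)/\bigl(e\,(N!)^2\bigr) \to 1$. Since $C_e(H_{RPS}(\mathcal{M})) = \exp(H_{max\text{-}RPS}(N)) = \exp(\log S(N)) = S(N)$ by \autoref{th.envelope} and Eq.~(\ref{eq.h_max-RPS}), it suffices to prove the statement for $S(N)$.

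First I would recall the two-sided bound from Lemma~\ref{le.appropriation}, namely
\begin{equation*}
e\times N!\left(\sum_{u=1}^{N}\frac{u!}{(N-u)!}-2\right)+1 \;\leqslant\; S(N) \;\leqslant\; e\times N!\left(\sum_{u=1}^{N}\frac{u!}{(N-u)!}-1\right)+2 .
\end{equation*}
Dividing through by $e\,(N!)^2$ gives
\begin{equation*}
\frac{1}{N!}\left(\sum_{u=1}^{N}\frac{u!}{(N-u)!}-2\right)+\frac{1}{e\,(N!)^2} \;\leqslant\; \frac{S(N)}{e\,(N!)^2} \;\leqslant\; \frac{1}{N!}\left(\sum_{u=1}^{N}\frac{u!}{(N-u)!}-1\right)+\frac{2}{e\,(N!)^2}.
\end{equation*}
The terms $1/N!$, $1/(N!)^2$ and $2/(N!)^2$ all vanish as $N\to\infty$, and by Lemma~\ref{le.limit} the quotient $\bigl(\sum_{u=1}^{N}\tfrac{u!}{(N-u)!}\bigr)/N!$ tends to $1$; hence both the lower and upper bounds converge to $1$. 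By the squeeze theorem, $S(N)/\bigl(e\,(N!)^2\bigr)\to 1$, which is exactly the claimed limit form, and therefore $\lim_{N\to\infty} C_e(H_{RPS}(\mathcal{M})) = \lim_{N\to\infty} S(N) = e\,(N!)^2$.

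There is essentially no hard analytic obstacle left at this stage, since all the substantive work has been front-loaded into Lemma~\ref{le.appropriation} (the careful tracking of the fractional remainders $\varepsilon_1,\varepsilon_2,\varepsilon_3$ coming from the floor functions $\lfloor e\times u!\rfloor$) and Lemma~\ref{le.limit} (the tail estimate showing only the last few summands of $\sum \tfrac{u!}{N!(N-u)!}$ matter). The only point requiring a word of care is the interpretation issue: the right-hand side $e\,(N!)^2$ is not a constant, so the statement must be read as an asymptotic equivalence, and I would make explicit that ``$=$'' here means the ratio of the two sides tends to $1$. One might also add a brief remark that the same bounds give the sharper two-sided envelope estimate $e\,(N!)^2\bigl(1+o(1)\bigr)$, which is what is used in the numerical comparisons of Sec.~\ref{se.examples}.
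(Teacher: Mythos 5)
Your proposal is correct and follows essentially the same route as the paper: a squeeze argument combining Lemma~\ref{le.appropriation} and Lemma~\ref{le.limit}. Your normalization by $e\,(N!)^2$ (reading the claim as the ratio tending to $1$) is a more careful phrasing of the same argument, since the paper writes limits of expressions that still depend on $N$; otherwise the two proofs coincide.
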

\begin{proof}[Proof of Lemma \ref{th.limS(N)}]
    According to Lemma \ref{le.appropriation} and Lemma \ref{le.limit}, When taking the limit at both ends of Eq.(\ref{eq.S(q)inequ}):
    \begin{align}
        \lim_{N \to \infty} e \times N! &\left(\sum_{u=1}^{N} \frac{u!}{(N-u)!} -2\right) +1 \leqslant \lim_{N \to \infty} S(N) \nonumber\\
        &\leqslant \lim_{N \to \infty} e \times N! \left(\sum_{u=1}^{N} \frac{u!}{(N-u)!} -1\right) +2 \label{eq.limends},
        \end{align}
    the limit form of both ends in Eq.(\ref{eq.limends}) is $e \times (N!)^2$, namely,

    \begin{align}\label{eq.limitofS}
        e \times (N!)^2 \leqslant \lim\limits_{N \to \infty}S(N) \leqslant e \times (N!)^2 \nonumber \\
        \Longrightarrow \lim\limits_{N \to \infty}S(N)=e \times (N!)^2.
    \end{align}

    Based on \autoref{th.maximumRPS}, Eq.(\ref{eq.envelope}), Eq.(\ref{eq.S(N)}), Eq.(\ref{eq.h_max-RPS}) and Eq.(\ref{eq.limitofS}), the limit form of the \textit{envelope} of RPS entropy is proved as follows.

    \begin{align}
        \lim\limits_{N\to \infty} \mathop{C_e(H_{RPS}(\mathcal{M}))}\limits_{\mathcal{M}:PES(\Omega) \to [0,1], |\Omega|=N} &= \lim\limits_{N\to \infty} \exp \left( H_{max-RPS} \right) \nonumber\\
        &= \lim\limits_{N\to \infty}  \sum\limits_{u=1}^{N} \left[ A_{N}^{u} (S_{A}(u)-1)\right] \nonumber \\
        &= \lim\limits_{N\to \infty} S(N) \nonumber\\
        & = e \times (N!)^2.
    \end{align}

    \autoref{th.limS(N)} is proved.
\end{proof}

\section{Numerical examples and discussion}\label{se.examples}

This section gives some examples to demonstrate the limit discussed earlier. Furthermore, it delves into the relationship between Shannon entropy \citep{shannon1948mathematical}, Deng entropy \citep{deng2016deng}, and RPS entropy \citep{chen2023entropy}, specifically examining their maximum values.

\begin{example}\label{ex.stirling}
In this example, the comparison between the maximum RPS entropy and the presented limit is on a large scale. Apart from that, the errors between factorial and its approximation--Stirling's formula \citep{tweddle2003james},
    \begin{equation}
        N! \approx \sqrt{2 \pi N}\left(\frac{N}{e}\right)^N \overset{def}{\longrightarrow} S_t(N),
    \end{equation}
are also presented here to better illustrate the presented limit.
\end{example}

Similar to Stirling's formula which is regarded as an approximation of factorial, the approximation of the \textit{envelope} of RPS entropy is denoted as $S_lim(N)=e\times (N!)^2$. The values of $S(N)$, $H_{max-RPS}$, and their corresponding proposed estimation $S_{lim}(N)$, $H_{lim-RPS}=\log S_{lim}(N)$, are listed in Tab.~\ref{tab.RPSscale}. To better illustrate the efficiency and conciseness of the limit form of the \textit{envelope} of RPS entropy, the results of factorial, Stirling's formula, and its absolute errors, relative errors are also listed in Tab.~\ref{tab.Stir}.

\begin{table}
    \caption{\label{tab.RPSscale}Value of $S(N), H_{max-RPS}$, and their corresponding estimation $S_{lim}(N), H_{lim-RPS}$ with different values of $N$. The error of $S_{lim}(N), H_{lim-RPS}$ are marked as $\Delta S, \Delta H$, respectively. While the subscripts in them indicate the relative error or absolute error. }
    \begin{tabular}{lccccccccc}
        $N$ & $S(N)$    & $S_{lim}(N)$ & $\Delta S_{abs}$ & $\Delta S_{rel}$ & $H_{max-RPS}$ & $H_{lim-RPS}$ & $\Delta H_{abs}$ & $\Delta H_{rel}$ & \\
        \hline
        10  & 3.96E+13  & 3.58E+13     & -3.79E+12        & -9.57E-02        & 4.52E+01      & 4.50E+01      & -1.45E-01        & -3.21E-03          \\
        20  & 1.69E+37  & 1.61E+37     & -8.26E+35        & -4.88E-02        & 1.24E+02      & 1.24E+02      & -7.20E-02        & -5.84E-04          \\
        30  & 1.98E+65  & 1.91E+65     & -6.49E+63        & -3.28E-02        & 2.17E+02      & 2.17E+02      & -4.80E-02        & -2.22E-04          \\
        40  & 1.85E+96  & 1.81E+96     & -4.58E+94        & -2.47E-02        & 3.20E+02      & 3.20E+02      & -3.60E-02        & -1.13E-04          \\
        50  & 2.57E+129 & 2.51E+129    & -5.08E+127       & -1.98E-02        & 4.30E+02      & 4.30E+02      & -2.90E-02        & -6.71E-05          \\
        60  & 1.91E+164 & 1.88E+164    & -3.16E+162       & -1.65E-02        & 5.46E+02      & 5.46E+02      & -2.40E-02        & -4.41E-05          \\
        70  & 3.96E+200 & 3.90E+200    & -5.61E+198       & -1.42E-02        & 6.66E+02      & 6.66E+02      & -2.10E-02        & -3.09E-05          \\
        80  & 1.41E+238 & 1.39E+238    & -1.75E+236       & -1.24E-02        & 7.91E+02      & 7.91E+02      & -1.80E-02        & -2.28E-05          \\
        90  & 6.07E+276 & 6.00E+276    & -6.70E+274       & -1.11E-02        & 9.20E+02      & 9.19E+02      & -1.60E-02        & -1.74E-05          \\
        100 & 2.39E+316 & 2.36E+316    & -2.38E+314       & -9.95E-03        & 1.05E+03      & 1.05E+03      & -1.40E-02        & -1.37E-05          \\
    \end{tabular}
    \end{table}

\begin{table}
    \caption{\label{tab.Stir}Value of $N!, \log_2 (N!)$ and its corresponding approximation using Stirling's formula. $\Delta$ denotes the error, while the texts of subscripts indicate the relative error or absolute error, respectively.}
    \begin{tabular}{lccccccccc}
        $N$ & $N!$      & $S_t(N)$  & $\Delta S_{t-abs}$ & $\Delta S_{t-rel}$ & $\log (N!)$ & $\log (S_t(N))$ & $\Delta \log_{t-abs}$ & $\Delta \log_{t-rel}$ & \\
        \hline
        10  & 3.63E+06  & 3.60E+06  & -3.01E+04          & -8.30E-03          & 2.18E+01    & 2.18E+01        & -1.20E-02             & -5.52E-04               \\
        20  & 2.43E+18  & 2.42E+18  & -1.01E+16          & -4.20E-03          & 6.10E+01    & 6.11E+01        & -6.01E-03             & -9.84E-05               \\
        30  & 2.65E+32  & 2.65E+32  & -7.36E+29          & -2.80E-03          & 1.08E+02    & 1.08E+02        & -4.01E-03             & -3.72E-05               \\
        40  & 8.16E+47  & 8.14E+47  & -1.70E+45          & -2.10E-03          & 1.59E+02    & 1.59E+02        & -3.01E-03             & -1.89E-05               \\
        50  & 3.04E+64  & 3.04E+64  & -5.06E+61          & -1.70E-03          & 2.14E+02    & 2.14E+02        & -2.40E-03             & -1.12E-05               \\
        60  & 8.32E+81  & 8.31E+81  & -1.15E+79          & -1.40E-03          & 2.72E+02    & 2.72E+02        & -2.00E-03             & -7.36E-06               \\
        70  & 1.20E+100 & 1.20E+100 & -1.43E+97          & -1.20E-03          & 3.32E+02    & 3.32E+02        & -1.72E-03             & -5.17E-06               \\
        80  & 7.16E+118 & 7.15E+118 & -7.45E+115         & -1.00E-03          & 3.95E+02    & 3.95E+02        & -1.50E-03             & -3.81E-06               \\
        90  & 1.49E+138 & 1.48E+138 & -1.37E+135         & -9.00E-04          & 4.59E+02    & 4.59E+02        & -1.34E-03             & -2.91E-06               \\
        100 & 9.33E+157 & 9.33E+157 & -7.77E+154         & -8.00E-04          & 5.25E+02    & 5.25E+02        & -1.20E-03             & -2.29E-06               \\
    \end{tabular}
    \end{table}

As shown in Tab.~\ref{tab.RPSscale} and Tab.~\ref{tab.Stir}, As $N$ increases, $S(N)$ and $S_{lim}(N)$ will grow at a rate close to $(N!)^2$, but both of them maintain the same number of digits. Besides, the absolute error between $S_{lim}(N)$ and $S{N}$ is roughly \textbf{one-hundredth} of $S{N}$, while the absolute error between $N!$ and Stirling's estimation is approximately \textbf{one-thousandth} of $N!$. When taking the logarithmic operation on them, both absolute and relative errors are rapidly reduced to an acceptable range.

Based on the above, it can be concluded that the proposed approximation to maximum RPS entropy is near as good as Stirling's approximation to factorial. When considering the form, the proposed approximation is much more concise compared with Stirling's formula.

\begin{example}\label{ex.SEDERPS}
Suppose a finite set with $N$ elements: $\Omega = \{\chi_1, \chi_2,\ldots, \chi_N  \}$. Its corresponding sample space, power set, and the PES can be marked as $\Omega, 2^\Omega, PES(\Omega)$, respectively.
\end{example}

Then $H_{max-SE}$, $H_{max-DE}$, and $H_{max-RPS}$ can be obtained by the following equations:

\begin{align}
    H_{max-SE}  & =\log(N),       \\
    H_{max-DE}  & =\log(3^N-2^N), \\
    H_{max-RPS} & =\log(S(N)).
\end{align}

In comparison to $H_{max-RPS}$, let's define $H_{lim-RPS}$ as an approximation to maximum RPS entropy and its relative error: $\Delta H_{RPS}$:

\begin{gather}
    H_{lim-RPS}=\log(e \times (N!)^2), \\
    \Delta H_{RPS} = \frac{H_{lim-RPS}-H_{max-RPS}}{H_{max-RPS}} \times 100\%.
\end{gather}

When $N$ increases, the different result of $H_{max-SE}, H_{max-DE}, H_{max-RPS}$ and $H_{lim-RPS}$ are shown in Tab.~\ref{tab.DESERPS} and Fig.~\ref{fig.DESERPS}.
And Fig.~\ref{fig.abs_rel_error} shows the relative error and absolute error between $H_{lim-RPS}$ and $H_{RPS}$.

As shown in Fig.~\ref{fig.DESERPS}, $H_{max-RPS}$ and $H_{lim-RPS}$ exhibit a greater slope, i.e., a higher growth rate, than $H_{max-SE}$ and $H_{max-DE}$, while $H_{max-DE}$ also exhibit a higher increasing speed in comparison to $H_{max-SE}$. This can be clarified through the simple fact that the uncertainty in a finite set's $PES$ is substantially larger than in its power set or sample space for a given number of elements. This is because RPS considers all permutations of the finite set, while DSET and probability theory do not. DSET, on the other hand, considers all potential subsets of the finite set and may thus be considered an extension of probability theory.

By comparing $H_{max-RPS}$ and $H_{lim-RPS}$ in Tab.~\ref{tab.DESERPS}, Fig.~\ref{fig.DESERPS} and Fig.~\ref{fig.abs_rel_error}, it's clear that when $N > 7$, the proposed approximation will converge to $H_{max-RPS}$ quickly. When $N>15$, the accuracy of the approximation will reach two decimal errors. Considering the complex computational steps required by the original calculation process, this approximation provides a more reasonable estimation of the maximum RPS entropy. In contrast, the estimation formula of the Stirling formula for the factorial is not as concise as the proposed estimation formula for the maximum RPS entropy.

\begin{figure}
    \centering
    \includegraphics[width=0.9\textwidth]{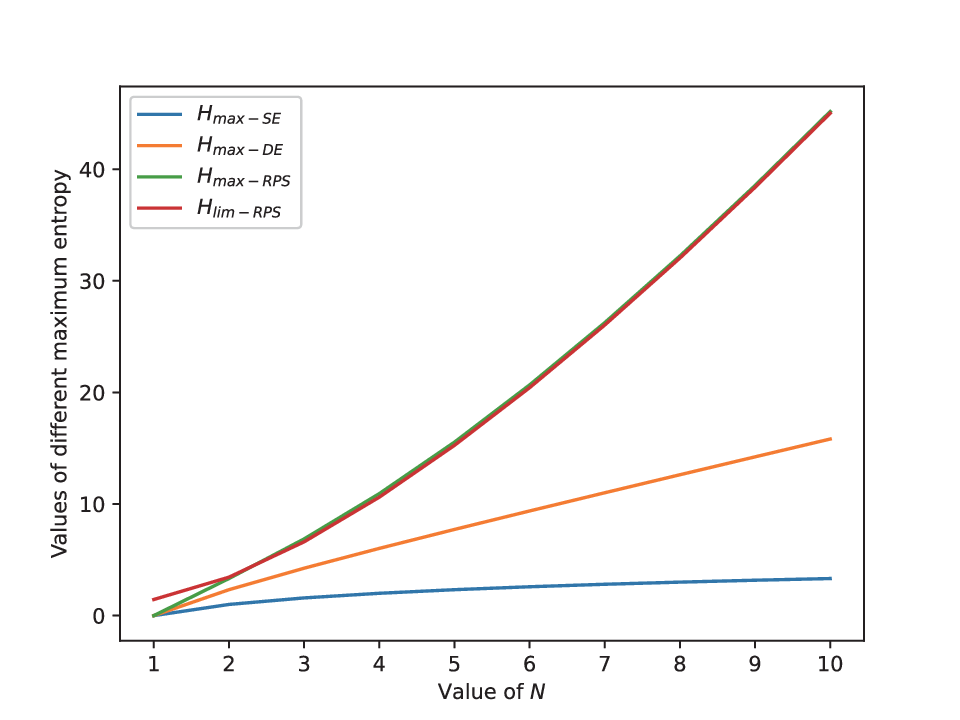}
    \caption{\label{fig.DESERPS}The trend of maximum Shannon entropy, Deng entropy, RPS entropy and the proposed approximation of maximum RPS entropy when $N$ changes.}
\end{figure}

\begin{figure}
    \centering
    \includegraphics[width=0.9\textwidth]{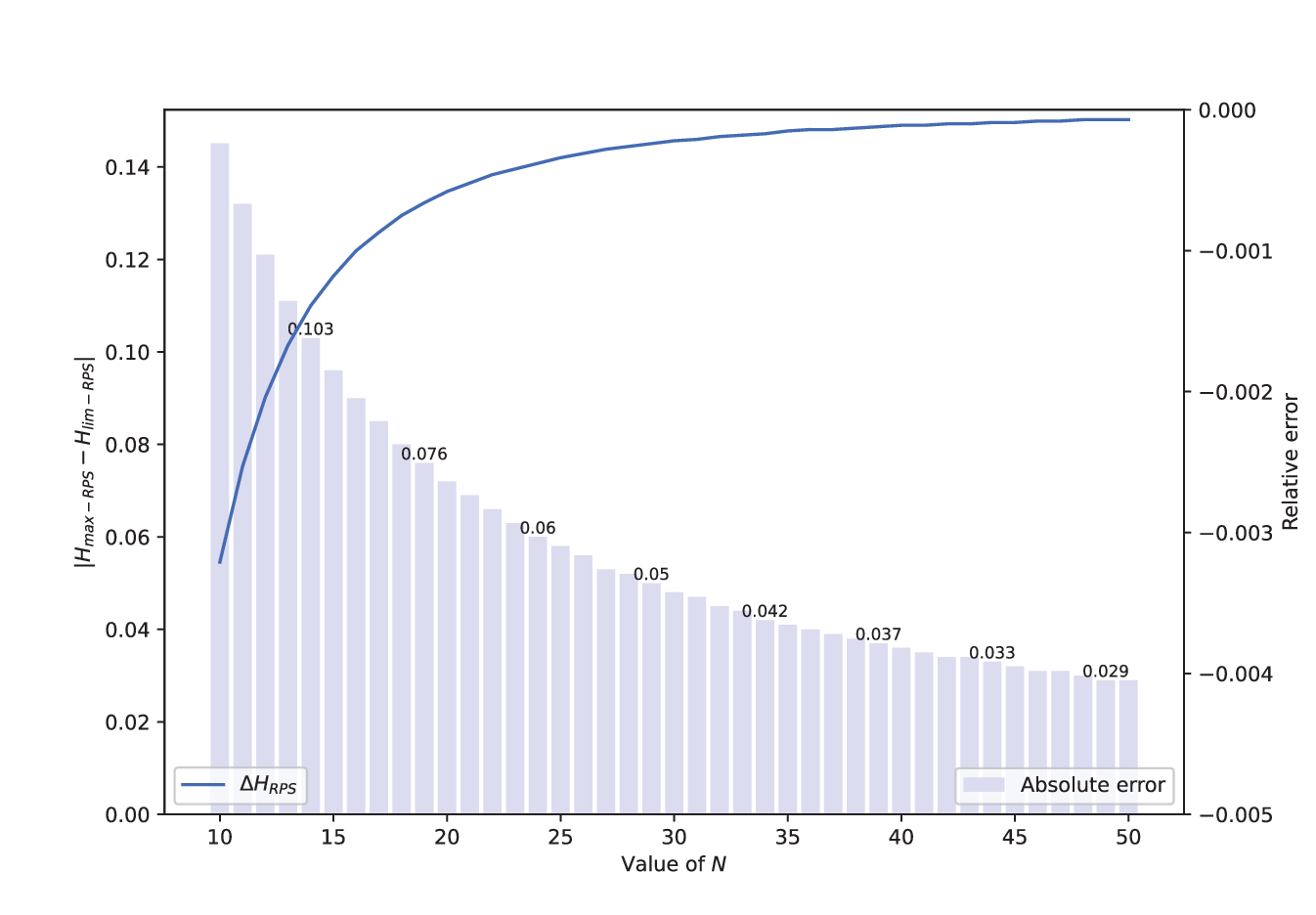}
\caption{\label{fig.abs_rel_error}The trend of relative error and absolute error of the proposed approximation of maximum RPS entropy when $N$ changes. The line chart is denoted as the relative error while the bar chart is denoted as the absolute error.}
\end{figure}

\begin{table}
    \caption{\label{tab.DESERPS}The maximum Shannon entropy, Deng entropy, RPS entropy and the presented result of limit of maximum RPS entropy with different values of $N$.}
    \begin{tabular}{lccccc}
        $N$ & $H_{max-SE}$ & $H_{max-DE}$ & $H_{max-RPS}$ & $H_{lim-RPS}$ & $\Delta H_{RPS}$ \\
    \hline
    1   & 0.00000      & 0.00000      & 0.00000       & 1.44270       & 0.00\%           \\
    2   & 1.00000      & 2.32193      & 3.32193       & 3.44270       & 3.64\%           \\
    3   & 1.58496      & 4.24793      & 6.87036       & 6.61262       & -3.75\%          \\
    4   & 2.00000      & 6.02237      & 10.92780      & 10.61260      & -2.88\%          \\
    5   & 2.32193      & 7.72110      & 15.54060      & 15.25650      & -1.83\%          \\
    6   & 2.58496      & 9.37721      & 20.66910      & 20.42640      & -1.17\%          \\
    7   & 2.80735      & 11.00770     & 26.24950      & 26.04110      & -0.79\%          \\
    8   & 3.00000      & 12.62230     & 32.22310      & 32.04110      & -0.56\%          \\
    9   & 3.16993      & 14.22660     & 38.54240      & 38.38100      & -0.42\%          \\
    10  & 3.32193      & 15.82440     & 45.16990      & 45.02480      & -0.32\%          \\
    \end{tabular}
    \end{table}

\begin{example}
    In this example, the computational complexity of the \textit{envelope} of RPS entropy, as well as the presented limit, is given.
\end{example}

According to \autoref{th.envelope} and Eq.(\ref{eq.SimS_{A}(N)}), the \textit{envelope} of RPS entropy is 

\begin{align}
    C_e (H_{RPS})&=\sum\limits_{a=1}^{N}\left[ A_{N}^{a} (S_{A}(a)-1)\right] \nonumber\\
   &= \sum\limits_{a=1}^{N} \left[ \frac{N!}{(N-a)!}\left( \lfloor e \times a! \rfloor-1 \right) \right].
\end{align}

The function performs a summation from $a=1$ to $N$, resulting in $N$ iterations and complexity of $O(N)$ or linear time. Within the loop, there are several operations that need to be considered:

\begin{itemize}
    \item The computation of $N!/(N-a)!$ using the factorial function, which can be done in $O(N)$ time.
    \item The computation of $\lfloor e \times a!\rfloor$ using the factorial and floor functions, also taking $O(a)$ time.
    \item Subtracting $1$, which is a constant time operation $O(1)$.
    \item Multiplying the results of the above operations, which is also a constant time operation $O(1)$.
\end{itemize}

Therefore, the operations within the loop have an asymptotic time complexity of $O(N)$. Since there are $N$ iterations, the overall asymptotic complexity becomes $O(N) * O(N) = O(N^2)$.

Regarding the presented limit of the \textit{envelope} of RPS entropy $C_{e(lim)}(H_{RPS}=e \times (n!)^2$, its computational complexity is clearly $O(N)$. Considering the accuracy of this approximation, as demonstrated in Example \ref{ex.SEDERPS} and Example \ref{ex.stirling}, the computational efficiency gained from this approximation will be a significant advantage for future applications. For instance, De Gregirui et al. \citep{degregorio2022improved} proposed an estimation of Shannon entropy to quantify the memory of a given system. And Muhammed Rasheed Irshad et al. \citep{irshad2024estimation} considered an estimation of weighted extropy and used it for reliability modeling.

\section{Conclusion}\label{se.colclusion}

RPS is a great extension of DSET. Though the maximum entropy of RPS is presented, its computational complexity makes it difficult to discuss the \textit{envelope} of RPS entropy. This study addressed this issue by presenting the limit form of the \textit{envelope} of RPS entropy. The result $e \times (N!)^2$ establishes a fascinating link between the natural constant $e$ and the factorial function, two fundamental concepts in mathematics.

In future research, there are mainly two problems. The first point of interest is exploring the physical meaning of $e \times (N!)^2$ and its potential correlation with RPS. Additionally, while an approximation of the maximum RPS entropy has been proposed, further research is needed to explore its practical applications in specific domains.

\section*{AUTHOR DECLARATIONS}

\subsection*{Conflict of Interest}
The authors have no conflicts to disclose.

\subsection*{Author Contributions}
\textbf{Jiefeng Zhou}: Conceptualization, Methodology, Formal analysis, Investigation, Writing-original draft, Writing-review \& editing. \textbf{Zhen Li}: Validation. \textbf{Kang Hao Cheong}: Validation.\textbf{Yong Deng}: Writing-review \& editing, Supervision, Project administration, Funding acquisition.

\subsection*{ACKNOWLEDGMENTS}
The work is partially supported by National Natural Science Foundation of China (Grant No. 62373078).

\subsection*{Data Availability}
Data sharing is not applicable to this article as no new data were created or analyzed in this study.


\bibliography{sn-bibliography}

\end{document}